\newtheorem{definition}{Definition}
\newtheorem{theorem}{Theorem}
\newtheorem{lemma}{Lemma}
\newif\if@restonecol
\def\BibTeX{{\rm B\kern-.05em{\sc i\kern-.025em b}\kern-.08em
    T\kern-.1667em\lower.7ex\hbox{E}\kern-.125emX}}
\begin{document}

\title{Differentially Private Data Publication with Multi-level Data Utility}

\author{Honglu~Jiang,~\IEEEmembership{Member,~IEEE,}
        S M Sarwar,
        Haotian Yu,
        and Sheikh Ariful Islam,~\IEEEmembership{Senior Member,~IEEE,}
      
\IEEEcompsocitemizethanks{\IEEEcompsocthanksitem H. Jiang is with the Department of Informatics and Engineering Systems, The University of Texas Rio Grande Valley, Brownsville, TX, USA. Email: honglu.jiang@utrgv.edu.
\IEEEcompsocthanksitem S M Sarwar is with the Department of Computer science, The University of Texas Rio Grande Valley, Edinburg, TX, USA. E-mail: sm.sarwar01@utrgv.edu.
\IEEEcompsocthanksitem H. Yu is with the Department of Data Analytics, The George Washington University, Washington, DC 20052 USA. E-mail: yuxx6789@gwu.edu.
\IEEEcompsocthanksitem S. A. Islam is with the Department of Informatics and Engineering Systems, The University of Texas Rio Grande Valley, Brownsville, TX, USA. Email: sheikhariful.islam@utrgv.edu.}}


\maketitle

\begin{abstract}
Conventional private data publication mechanisms aim to retain as much data utility as possible while ensuring sufficient privacy protection on sensitive data. Such data publication schemes implicitly assume that all data analysts and users have the same data access privilege levels. However, it is not applicable for the scenario that data users often have different levels of access to the same data, or different requirements of data utility. The multi-level privacy requirements for different authorization levels pose new challenges for private data publication. Traditional PPDP mechanisms only publish one perturbed and private data copy satisfying some privacy guarantee to provide relatively accurate analysis results. To find a good tradeoff between privacy preservation level and data utility itself is a hard problem, let alone achieving multi-level data utility on this basis. In this paper, we address this challenge in proposing a novel framework of data publication with compressive sensing supporting multi-level utility-privacy tradeoffs, which provides differential privacy. Specifically, we resort to compressive sensing (CS) method to project a $n$-dimensional vector representation of users' data to a lower $m$-dimensional space, and then add deliberately designed noise to satisfy differential privacy. Then, we selectively obfuscate the measurement vector under compressive sensing by adding linearly encoded noise, and provide different data reconstruction algorithms for users with different authorization levels. Extensive experimental results demonstrate that ML-DPCS yields multi-level of data utility for specific users at different authorization levels.
\end{abstract}

\begin{IEEEkeywords}
 Differential privacy, compressive sensing, multi-level utility, privacy preservation.
\end{IEEEkeywords}

\section{Introduction}

 A large amount of data has been generated with the rapid development of information technology, which are often published to third parties for data analysis, recommendations, targeted advertising, and reliable predictions. Publishing data containing personal sensitive information results in an increasing concern on privacy violations. Privacy-preserving data publishing (PPDP) has gained significant attentions in recent years as a promising approach for information sharing while preserving data privacy \cite{FBW}. 

 Generally speaking, commonly used approaches for PPDP \cite{cai2018,cai2019trading,JPJ,WXY,cai2021gen} can be characterized into three categories: encryption technology \cite{GRM,zheng,CTS,JGW,WXY,TTC}, $k$-anonymity \cite{SLA} and its derivative approaches ($l$-diversity \cite{MKV}, $t$-closeness \cite{LLV}), and differential privacy \cite{DCD,cai2019}. Differential privacy has gradually become the \emph{de facto} standard privacy definition and provides a strong privacy guarantee, which rests on a sound mathematical foundation with a formal definition and rigorous proof while making almost no assumption on the attacker’s background knowledge.

Conventional private data publication mechanisms aim to retain as much data utility as possible while ensuring sufficient privacy protection on sensitive data. Such data publication schemes implicitly assume that all data analysts and users (for data analysis, recommendations or reliable predictions) have the same data access privilege. However, it is not applicable for the scenario that data users often have different levels of access to the same data, or different requirements of data utility. For example, data analysts aim to mine data correlations, laws of data change and data trend estimation not the individual-specific information; while banks often need individuals' reliable and truthful information to achieve identity verification. Motivated by this application scenario, it would be desirable to provide a mechanism supporting multi-level de-identification, where the more trusted a data miner is, the high level of data recovery quality.
 
The multi-level privacy requirements for different authorization levels pose new challenges for private data publication. Traditional PPDP mechanisms only publish one perturbed and private data copy satisfying some privacy guarantee to provide relatively accurate analysis results. To find a good tradeoff between privacy preservation level and data utility itself is a hard problem, let alone achieving multi-level data utility on this basis.

Compressive sensing (CS) theory was first proposed by Candes \cite{CEJ} \emph{et al.} and Donoho \cite{DDL}, which has attracted considerable attention. It states that a signal can be represented by far fewer samples compared to conventional data acquisition systems. This is a sensing strategy that enables significantly lower data rate and computation cost in the sensing part. While CS is a mature signal processing technique, it can also be employed as a data hiding technology to provide confidentiality. Moreover, various corresponding decoding algorithms that map an under-sampled set of CS-encoded measurements into a recovery of original signal/data \cite{CVMM} have been proposed. Motivated by this, we employ CS to achieve the data hiding and data reconstruction at the same time.

In this paper, we address the challenge of multi-level utility in proposing a novel framework of data publication with compressive sensing supporting multi-level utility-privacy tradeoffs, which provides differential privacy. Our contributions can be summarized as follows:

\begin{itemize}

\item First, we resort to compressive sensing (CS) method to project a $n$-dimensional vector representation of users' data to a lower $m$-dimensional space, and then add deliberately designed noise to satisfy differential privacy. Then, we expand the scope of differentially private data publication to support multi-level data utility by adding linearly encoded noise to selectively obfuscate the measurement vector after compressive sensing for different level of privacy preservation.

\item We provide different data reconstruction algorithms for users with different authorization levels. A data user with low-level authorization cannot recover the data without the CS measurement matrix, and he can only obtain the differentially private data. A semi-honest user knowing the measurement matrix but without the perturbation (encoding) matrix can only partially recover the data with limited data utility, while the fully authorized users processing both the measurement matrix and perturbation matrix can fully recover the data with high recovery performance.

\item We evaluate the performance of data utility on four real-world datasets in two aspects, the $L_2$ error and the classification error rate of SVM classification on the multi-level recovered data, respectively. Comprehensive experiments on four real-world datasets demonstrate that ML-DPCS achieves multi-level data utility, of which two levels of data utility outperform existing methods while providing differential privacy.
\end{itemize}

The rest of this paper is organized as follows. We provide a literature review in Section \ref{sec:rel}. Section \ref{sec:model} formulates our problem and presents necessary background knowledge on compressive sensing and differential privacy. In Section \ref{sec:our}, we propose our ML-DPCS mechanism for data publication satisfying differential privacy and the detailed multi-level data reconstruction approach based on multi-class compressive sampling. Comprehensive experimental studies on four real-world datasets are presented in Section \ref{sec:eva}. Finally, Section \ref{sec:con} concludes the paper.

\section{Related Work}\label{sec:rel}

Privacy Preserving Data Publishing has gained significant attentions, for which the approaches can be divided into three categories: $k$-anonymity \cite{SLA} and its variations including $l$-diversity \cite{MKV}and $t$-closeness \cite{LLV}, cryptographically secure computation \cite{GRM} and differential privacy \cite{DCD}. There had been many work focusing on publishing differentially private aggregates of sensitive information including various differentially private mechanisms for high-dimensional data publication \cite{QYL,ZCG,CXZ,XRZ,CTT,JPJ,CTS,JGW,WXY,TTC}. A common used approach for high-dimensional data publication is dimensionality reduction based on Bayesian network model \cite{ZCG}, sampling technique \cite{CXZ} and random projection \cite{XRZ,CTT}.

However, all these existing schemes mentioned above do not support multi-level access or data utility of the same published dataset. Xiao \emph{et al.} proposed an algorithm with multilevel uniform perturbation of PPDM and evaluate its performance under the $\rho_1-\rho_2$ privacy measurement \cite{XTC}. In \cite{LCL}, Li \emph{et al.} addressed the challenge of Multilevel Trust PPDM services and proposed an additive perturbation approach through adding random Gaussian noise. Their approach allows a data owner to generate distinctly perturbed copies of the data according to different trust levels. Palanisamy \emph{et al.} designed and implemented multi-level utility-controlled data anonymization scheme based on user access privilege rights in the context of large association graphs \cite{PLL}. In \cite{he2017}, He \emph{et al.} investigated how to optimize the tradeoff between latent-data privacy and customized data utility. They identified an optimization problem seeking a data sanitization strategy to realize the maximum latent-data privacy with customized data utility. In \cite{YAP}, Yamac \emph{et al.} proposed a privacy-preserving method supporting de-identification at multiple privacy levels based on compressive sensing. Specifically, different level of recovery quality (data utility) are provided for users at different authorization levels.

The problem of data hiding under compressive sensing theory has been addressed in \cite{ZXZ,DCH,VGT,PCA}. In \cite{ZXZ}, Zhang \emph{et al.} studied the content reconstruction problem and proposed a watermarking mechanism with flexible self-recovery quality. 
In \cite{DCH}, Delpha \emph{et al.} presented a compressive sensing based watermarking solution to ensure the sparse data be secretly embedded using Costa's quantization based hiding scheme. Yamac \emph{et al.} \cite{YCD} provided a data hiding scheme to linearly embed and hide data in compressively sensed signals through an encoding matrix, which can be nonlinearly reconstructed. Recently in \cite{YMM}, Yamac \emph{et al.} proposed a privacy-preserving approach, which supports multi-level de-identification and performs data acquisition, encryption and data hiding based on compressive sensing.

The following aspects distinguish our work from the existing approaches. First, our work focuses on differentially private high-dimensional data publishing, which resorts to compressive sensing to achieve dimension reduction, while existing approaches adopt Bayesian network model or random projection. Specifically, the appealing advantage of compressive sensing lies in that it projects high-dimensional signal into a low-dimensional space and achieves data reconstruction by solving the optimization problem, which
fulfills the purpose of both dimension reduction and two-level data utility. Second, we add Laplace noise to compressive sensing result to satisfy differential privacy and add impulsive noise to partially perturb the published data to achieve multi-level data utility. The measurement matrix and the encoding matrix can be considered as secret keys that users with different authorization level have different access to the keys.

\section{Preliminaries}\label{sec:model}

\subsection{Notation}

In this paper, we use $\mathbb{R}^n$ to denote a coordinate space over the real numbers. Lowercase bold letters denote vectors and uppercase letters represent matrice. Let $\|\bm{x}\|_p$ denote the $l_p$-norm of a vector $\bm{x} \in \mathbb{R}^n$, which can be defined as:

\begin{equation}
\|\bm{x}\|_p=(\sum\limits_{i=1}^n|\bm{x}_i|^{p})^{1/p}
\end{equation}
where $p$ is a positive number. Intuitively, $l_1$-norm gives the Manhattan distance of two points while $l_2$-norm computes the Euclidean distance.

Strictly speaking, $l_0$-norm is not actually a norm, which is a cardinality function to obtain the total number of non-zero elements in a vector. It is formally defined as follows:

\begin{equation}
\|\bm{x}\|_0=\#(i|\bm{x}_i\neq 0)
\end{equation}

\subsection{Compressive Sensing}

Compressive sensing (CS), also known as compressive sampling, is a novel sensing/sampling technique in data acquisition. It efficiently achieves signal sampling at a frequency far lower than the Nyquist sampling theorem requires, and can reconstruct the signal from the compressed samples with high probability. Specifically, CS first linearly projects a sparse or compressible high-dimensional signal into a low-dimensional space to achieve signal sampling and data compression simultaneously, followed by the reconstruction process that recovers the signal from far fewer samples by solving the optimization problem. To make the reconstruction process possible, CS relies on two principles \cite{CMW}: \emph{sparsity}, which requires signals to be sparse or compressible; and \emph{incoherence}, which is applied through the restricted isometric property (RIP) \cite{CTT}.

Assume the original signal $\bm{x}\in \mathbb{R}^n$ is an $n$-dimensional vector, it can be represented by an $n \times n$ \emph{orthonormal} basis $\Psi$. Let $\bm{s}$ be the projection coefficient vector of $\bm{x}$ under the basis $\Psi$, then we have $\bm{x}=\bm{s}\Psi$. We call $\bm{s}$ $S$-sparse if it has at most $S$ nonzero entries, that is, $\|s\|_0\leq S$. Therefore, a vector $\bm{x}$ has an $S$-sparse representation if there is an orthonormal basis $\Psi$, where $\bm{x}$'s representation $\bm{s}$ is $S$-sparse (compressible) under $\Psi$. While sparsity of the signal is one of the prerequisites for the application of compressive sensing, fortunately, most natural signals exhibit sparsity or compressibility, and non-strict sparse signals can be approximately regarded as sparse.

For an original signal $\bm{x}$ with a sparse representation, it can be linearly sampled using a CS measurement matrix $\Phi \in \mathbb{R}^{m\times n}$ ($m\ll n$) to obtain an $m$-dimension result: 
\begin{equation}
\bm{y}=\Phi\bm{x}=\Phi \Psi \bm{s}
\end{equation}
of which $A=\Phi \Psi$ is the sensing matrix. The measurement matrix $\Phi$ is employed for signal sampling, while the sensing matrix $A$ is used for signal reconstruction.

$S$-Restricted isometry Property (RIP) implies that it is possible to main the data's statistical characteristics while projecting the original data into a low-dimensional space. To introduce this notion, $\Phi_T, T\subset \{1,\cdots,m\}$ is the $n\times |T|$ sub-matrix obtained by extracting the columns of $\Phi$ corresponding to the indices in $T$. $\delta_S$ is the $S$-restricted isometry constant of $\Phi$, which is the smallest quantity such that:

\begin{equation}
(1-\delta_S)\|\bm{s}\|_{l_2}^2\leq \|\Phi_T \bm{s}\|_{l_2}^2\leq (1+\delta_S)\|\bm{s}\|_{l_2}^2
\end{equation}

for all $S$-sparse vectors $\bm{s}$.

Since $m\ll n$, the sampling process can be considered as the process of projecting a sparse or compressible high-dimensional signal into a low-dimensional space, that is, the process of data compression. Then, we can recover the approximate value $\tilde{\bm{s}}$ of the original data signal $\bm{x}$ by solving the $l_0$-optimization problem:

\begin{equation}
\begin{aligned}
&\tilde{\bm{s}}=\arg \min_{\bm{s}} \|\bm{s}\|_0 \\
&\mathrm{s.t.} \quad \bm{y}=A\bm{s}
\end{aligned}
\end{equation}
of which $\|\bm{s}\|_0 $ denotes the $l_0$-norm of $\bm{s}$. We can obtain the recovered signal $\tilde{\bm{x}}$ based on $\tilde{\bm{x}}=\Psi \tilde{\bm{s}}$.

$l_0$-optimization tries to minimize the $l_0$-norm of a vector corresponding to some constraints, which has been a topic of growing interest in many applications including compressive sensing. However, such an optimization problem is non-convex and known to be NP-hard \cite{NBK}. In many cases, $l_0$-optimization problem could be relaxed to be higher-order norm problem such as $l_1$-optimization or $l_2$-optimization:

\begin{equation}
\begin{aligned}
&\tilde{\bm{s}}=\arg \min_{\bm{s}} \|\bm{s}\|_1 \\
&\mathrm{s.t.} \quad \bm{y}=A\bm{s}
\end{aligned}
\end{equation}

\subsection{Differential Privacy}

\emph{Differential privacy} (DP) was proposed by Dwork \cite{DCD} to provide a strong privacy guarantee and protect against the privacy disclosure of statistical databases.
 It has become the \emph{de facto} standard of privacy preservation, which ensures query results of a dataset are insensitive to the change of a single record. That is, whether a single record exists in the dataset has little effect on the output distribution of the analytical results. Differential privacy is defined based on the neighboring datasets $D$ and $D'$, where $D'$ differs from $D$ by only one record:

\begin{definition}[Differential privacy \cite{DMN}] \label{def:DP}
A randomized algorithm $M$ is $\epsilon$-differentially private if for any pair of neighboring datasets $D$ and $D'$, and for all sets $S$ of possible outputs, we have
\begin{equation}
Pr[M(D)\in S]\leq e^{\epsilon}Pr[M(D')\in S],\nonumber
\end{equation}
where $\epsilon$ is often a small positive real number. 
\end{definition}

In this definition, $\epsilon$ is called \emph{privacy budget}. The smaller the $\epsilon$, the higher the level of privacy preservation. A smaller $\epsilon$ provides greater privacy preservation at the cost of lower data accuracy with more additional noise.

Differential privacy can be achieved by two best known mechanisms, namely the \emph{Laplace mechanism} \cite{DMN} and \emph{exponential mechanism} \cite{MTK}, which are respectively proposed for numerical and query results. We provide the formal definition of Laplace mechanism as follows:

\begin{definition}[Laplace mechanism \cite{DMN}] \label{def:lap}

Given a random algorithm $A$ with the input dataset $D$ and function $f: D\rightarrow R^{d}$ with global sensitivity $GS_{f}$, $A(D)=f(D)+Z$ is $\epsilon$-differentially private, where $Z\sim Lap(GS_{f}/\epsilon)$.
\end{definition}

\section{Proposed Mechanism}\label{sec:our}

In this paper, we employ compressive sensing to achieve data compression and dimensionality reduction, through projecting the data into a lower-dimensional space and 
simultaneously reducing the data correlation of the original data. Then the resulting data is perturbed by adding noise to provide differential privacy, then adding impulsive noise to embed perturbed message to the compressed data. Combining compressive sampling technology with differential privacy allows us to use less noise than previous differentially private mechanisms. 

The proposed mechanism has an advantage over existing privacy-preserving methods, namely, it supports multiple level of data utility for different end users. Specifically, for a data user with low authorization level without knowing the CS measurement matrix $A$ can only obtain the perturbed data $y_w$ to conduct data analysis; for semi-honest third-party users, they access the differentially private data and can partially recover the data with high data utility guarantee; while the authorized users can de-identify and fully recover the data using the keys of both the measurement matrix $A$ and encoding matrix $B$.

This section provides a detailed elaboration of ML-DPCS for high-dimensional data publication satisfying \textbf{d}ifferential \textbf{p}rivacy based on \textbf{c}ompressive \textbf{s}ampling with \textbf{m}ulti-\textbf{l}evel data utility. The framework of ML-DPCS is shown in Figure 1.

\begin{figure}[!htb]
 \centering
  \includegraphics[width=0.48\textwidth]{./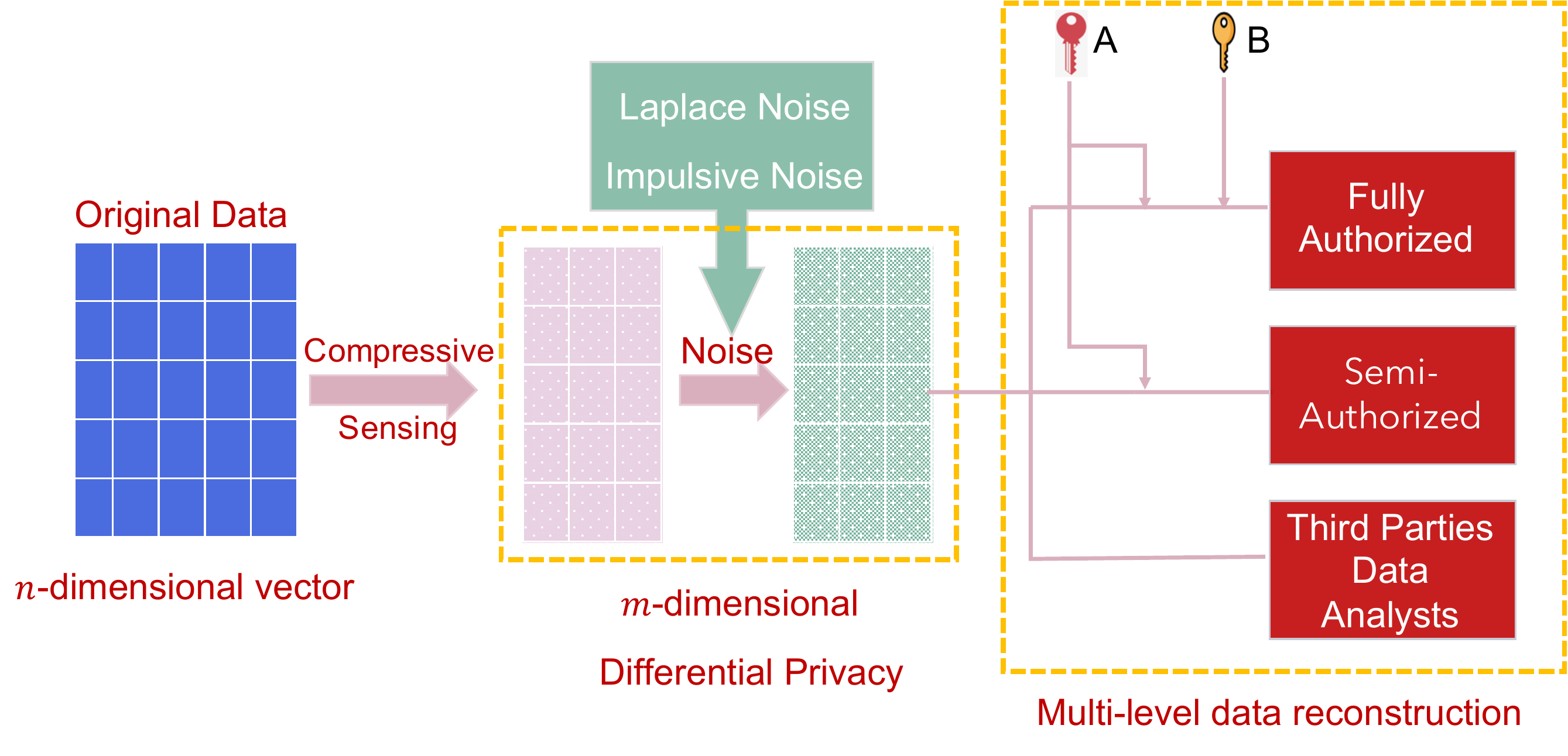}
  \caption{Overview of ML-DPCS.}
  \label{fig:overview}
\end{figure}

\subsection{Data Perturbation based on Compressive Mechanism}\label{bay}

The key idea of ML-DPCS is to project a $n\times n$ dataset into a lower-dimensional $n\times m$ dataset through applying the sampling operator $\Phi$ to obtain 
the sample result $\bm{y}=\Phi\bm{x}$, $y \in \mathbb{R}^m$, of which $\bm{x}$ has a sparse representation and can be linearly sampled as $\bm{x}=\Psi\bm{s}$. Then the result $\bm{y}$ is perturbed by adding noise $\bm{z}$ to guarantee the differential privacy. That is $\tilde{\bm{y}}=\bm{y}+\bm{z}$. If the noise is bounded by $\|\bm{z}\|_2\leq \delta$, it is possible to reconstruct $s$ by solving the optimization problem:

\begin{equation}\label{opt}
\begin{aligned}
&\tilde{\bm{s}}=\arg \min_{\bm{s}} \|\bm{s}\|_1 \\
&\mathrm{s.t.} \quad \|\bm{y}-\bm{As}\|_2\leq \delta
\end{aligned}
\end{equation}

This optimization problem in equation \ref{opt} is known as Basis Pursuit Denoising (BPDN) \cite{DET}.

In order to achieve differential privacy, we add the noise drawn randomly and independently from Laplace distribution, depending on the $L_2$ sensitivity of the measurement matrix $\Phi$. The definition of $L_2$ sensitivity of $\Phi$ is presented in Equation \eqref{sen1} and \eqref{sen2}. 
\begin{equation}\label{sen1}
\begin{aligned}
\Delta q=\max_{1\leq i \leq n} \sqrt{\sum_{j=1}^{m}|\Phi|^2}.
\end{aligned}
\end{equation}

Equivalently, $\Delta q$ can also be defined as:
 
 \begin{equation}\label{sen2}
 \Delta q=\max_{e_i}\|e_i \Phi\|_2
\end{equation}
where $\{e_i\}_{i=1}^{n}$ are standard basis unit vectors. Correspondingly, we add random noise to each entry of $\bm{y}$ and each entry in $\bm{z}$ is drawn randomly and independently from the Laplace distribution $Lap(\sqrt{m})/\epsilon$. That is, for each $\bm{y}[i]$, let $\tilde{\bm{y}[i]}=\bm{y}[i]+ \bm{z}[i]$, where $\bm{z}[i]\sim Lap(\sqrt{m})/\epsilon$.

The detailed procedure is presented in Algorithm \ref{ag1}.

\begin{algorithm}
 \begin{small}
\caption{Differentially Private Compressive Mechanism}
\label{ag1}

\LinesNumbered
\KwIn{Original data $\bm{x}$ with a sparse basis $\Psi$, privacy budget $\epsilon$}
\KwOut{$\tilde{\bm{y}}$}
 
Construct a normalized matrix $\Phi \in \mathbb{R}^m$ with i.i.d. Gaussian distribution;

Compute $\bm{y}=\Phi\bm{x}$;

Construct a $n\times m$ noise matrix $\bm{z}$ with i.i.d. $Lap(\sqrt{m})/\epsilon$,

Compute differentially private $\tilde{\bm{y}}=\bm{y}+\bm{z}$;

\textbf{return} $\tilde{\bm{y}}$
 \end{small}
\end{algorithm}

Moreover, we consider measurement vector corrupted by both Laplace and impulsive noise. The compressively sensed measurements could be linearly encoded by directly adding the encoded messages onto the measurement vector $\bm{y}$. The encrypted data can be denoted as:
\begin{equation}
y_w=A\bm{s}+B\bm{w}+\bm{z}
\end{equation}
of which $\bm{w}\in \{-a,a\}^{M}$ is an $M$ bit dense message with $M<m<n$, and $B$ is an $m\times M$ coding matrix generated by a secret seed shared by the encoder and decoder. There is an embedding power construction constraint $\|B\bm{w}\|_{l_2}\leq P$ to remain the quantization range of the data, then $a$ can be chosen accordingly.

\begin{theorem}
The compressive mechanism ML-DPCS satisfies $\epsilon$-differential privacy.
\end{theorem}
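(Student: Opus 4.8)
The plan is to reduce the claim to the Laplace mechanism (Definition~\ref{def:lap}) by treating the composition $\bm{x}\mapsto\bm{y}=\Phi\bm{x}$ as a deterministic vector-valued query $q$ with output in $\mathbb{R}^m$, and then verifying that the noise added in Algorithm~\ref{ag1} has scale matching the $L_2$-sensitivity of $q$ divided by $\epsilon$. First I would fix two neighboring datasets $D,D'$ whose vector representations $\bm{x},\bm{x}'$ differ in a single record, so that $\bm{x}-\bm{x}'=c\,e_i$ for some index $i$ and some bounded scalar $c$ (normalized, as is standard, so $|c|\le 1$); then $\|q(D)-q(D')\|_2=\|\Phi(\bm{x}-\bm{x}')\|_2=|c|\,\|\Phi e_i\|_2\le \max_{e_i}\|\Phi e_i\|_2=\Delta q$, which is exactly the quantity defined in \eqref{sen1}--\eqref{sen2}. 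Next I would invoke the stated Laplace-mechanism guarantee: adding independent $Lap(\Delta q/\epsilon)$ noise to each coordinate of an $\mathbb{R}^m$-valued query with $L_2$-sensitivity $\Delta q$ yields $\epsilon$-differential privacy; since $\Phi$ is constructed as a normalized matrix one argues $\Delta q=\sqrt{m}$ (each column has unit-scaled entries so that $\|\Phi e_i\|_2$ is on the order of $\sqrt{m}$), which is why the algorithm draws each $\bm{z}[i]\sim Lap(\sqrt{m})/\epsilon$. Writing out the density ratio $\prod_{i=1}^m \exp(-\epsilon|\tilde{\bm{y}}[i]-\bm{y}[i]|/\sqrt{m})\big/\exp(-\epsilon|\tilde{\bm{y}}[i]-\bm{y}'[i]|/\sqrt{m})$ and bounding it by $\exp(\epsilon\|\bm{y}-\bm{y}'\|_1/\sqrt{m})$, then relating the $\ell_1$ and $\ell_2$ norms, closes the core argument.

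Then I would address the second-stage perturbation $y_w=A\bm{s}+B\bm{w}+\bm{z}$. The key observation is that $B\bm{w}$ is a data-independent transformation: the coding matrix $B$ is generated from a secret seed and $\bm{w}$ is a fixed message, neither depending on which record is present in $D$. Hence adding $B\bm{w}$ is post-processing of the differentially private output $\tilde{\bm{y}}$ (more precisely, a deterministic shift applied after the private release, or one that commutes with the privacy analysis because it cancels in the neighboring-dataset comparison). By the post-processing immunity of differential privacy — which I would either cite or prove in one line from Definition~\ref{def:DP} by noting that any measurable map $g$ satisfies $\Pr[g(M(D))\in S]=\Pr[M(D)\in g^{-1}(S)]\le e^\epsilon\Pr[M(D')\in g^{-1}(S)]=e^\epsilon\Pr[g(M(D'))\in S]$ — the released $y_w$ inherits $\epsilon$-differential privacy. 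The three reconstruction algorithms for the different authorization levels then operate on $y_w$ and, again by post-processing, none of them can degrade the privacy guarantee.

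The main obstacle I anticipate is the sensitivity computation for $\Phi$: the argument that $\Delta q=\sqrt{m}$ presupposes a specific normalization convention for the i.i.d.\ Gaussian measurement matrix (e.g.\ unit-variance entries, so $\mathbb{E}\|\Phi e_i\|_2^2=m$) and an implicit bound on the per-record contribution of $\bm{x}$ (typically each attribute, or the whole record, lies in a unit ball). I would make these normalization assumptions explicit at the start of the proof, since without them the claimed noise scale $Lap(\sqrt{m})/\epsilon$ is not obviously calibrated to the true sensitivity; a cleaner route, if one wants a worst-case rather than typical-case statement, is to define $\Phi$ so that every column has exactly $\ell_2$-norm $\sqrt{m}$ (or to simply set $\Delta q$ to be that column-norm bound and carry it symbolically), in which case the matching of noise scale to sensitivity is exact and the rest of the proof is routine. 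A minor secondary point is confirming that the embedding-power constraint $\|B\bm{w}\|_{\ell_2}\le P$ plays no role in the privacy argument — it governs reconstruction quality only — so it can be set aside here.
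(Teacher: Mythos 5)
Your core argument is the same as the paper's: treat $\bm{y}=\Phi\bm{x}$ as a vector-valued query, bound its sensitivity by the maximal column norm of $\Phi$, and invoke the Laplace mechanism of Definition~\ref{def:lap} to conclude that Algorithm~\ref{ag1} is $\epsilon$-differentially private — the paper's proof is exactly this, compressed into three sentences ("the sensitivity of $g=\Phi\bm{x}$ is $\sqrt{m}$, add Laplace noise per entry"). Where you go beyond the paper is in two useful ways: (i) you explicitly handle the second stage $y_w=A\bm{s}+B\bm{w}+\bm{z}$ and the three reconstruction algorithms via post-processing immunity, which the paper's proof omits entirely even though the theorem is stated for the whole ML-DPCS mechanism (your caveat that $\bm{w}$ and $B$ must be data-independent is the right condition to state); and (ii) you write out the density-ratio computation and flag that the claimed sensitivity $\Delta q=\sqrt{m}$ hinges on an unstated normalization of $\Phi$ and of the per-record contribution — a gap the paper glosses over by citing Definition~\ref{def:lap}, whose "global sensitivity" is really an $\ell_1$ quantity. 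One caution on your proposed repair: taking columns of exact $\ell_2$-norm $\sqrt{m}$ does not by itself calibrate $Lap(\sqrt{m}/\epsilon)$ noise for pure $\epsilon$-DP, since the Laplace mechanism needs the $\ell_1$ sensitivity; the consistent reading (and the one that makes your own density-ratio chain close, via $\|\cdot\|_1\le\sqrt{m}\|\cdot\|_2$) is that columns of $\Phi$ are normalized to unit $\ell_2$-norm and each record's change lies in the unit ball, so that the $\ell_1$ sensitivity is at most $\sqrt{m}$ and the noise scale $\sqrt{m}/\epsilon$ is exactly right. With that normalization stated up front, your proof is complete and in fact more rigorous than the paper's.
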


\begin{proof}

The sampling process can be characterized as a random projection from $n$-dimensional data to $m$-dimensional space, $g=\bm{y}=\Phi\bm{x}$. The sensitivity of $g$ is $\Delta_g=\sqrt m$. We add noise to each entry of $\bm{y}$ making use of the Laplacian mechanism, which is $\epsilon$-differentially private according to Definition \ref{def:lap}.

\end{proof}

\subsection{Multi-level Data Reconstruction}\label{bay}

In this subsection, we provide the data reconstruction process in detail for data users with different authorization levels.


 On the receiver end, a data user not knowing the CS measurement matrix $A$ can only obtain the perturbed data $\bm{y_w}$ to conduct data analysis.

\subsubsection{Semi-honest users}
  For a semi-authorized user, who only has the key $A$ can apply the $l_1$-reconstruction scheme to recover the data as shown in Algorithm \ref{ag2}. 

 Based on the optimization problem in Equation \ref{opt}, if the RIP is satisfied with $\delta_{2S}<\sqrt{2}-1$, it is possible to approximate $x$ with a bounded error:

 \begin{equation}
\|\bm{x}-\bm{x^*}\|_{l_2}\leq C_0\delta
\end{equation}
where $C_0$ depends on on $\delta_{2S}$.

\begin{algorithm}
 \begin{small}
\caption{Reconstruction for a semi-authorized user}
\setstretch{1.35}
\label{ag2}

\LinesNumbered
\KwIn{$\bm{y}$, $\Phi$, $\Psi$, A\\
  Hyper-parameter: $\delta$}
\KwOut{$\bm{x^*}$}

Reconstruct $\bm{s^*}$: $\bm{s^*}=\arg \min\limits_{{\bm{s}}} \|\bm{s}\|_1$, $\mathrm{s.t.} \quad \|\bm{y_w}-\bm{As}\|_2\leq \delta$

Compute $\bm{x^*}=\Psi\bm{s^*} $

\textbf{return} $\bm{x^*}$
\end{small}
\end{algorithm}

 While the compressive mechanism satisfies differential privacy, we need to provide the data utility analysis of $\bm{x^*}$ to demonstrate $\bm{x^*}$ is close to $\bm{x}$, that is, providing utility guarantee. We adopt the squared Euclidean distance between two vectors to measure the utility guarantee after the privacy transformation.

\begin{lemma}\cite{LZW} 
The compressive mechanism in Algorithm \ref{ag1} provides data utility $\|\bm{x^*}-\bm{x}\|_2=O(log(n)/\epsilon)$ with a high probability.
\end{lemma}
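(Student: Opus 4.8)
The plan is to bound the reconstruction error $\|\bm{x}^* - \bm{x}\|_2$ by chaining together two facts that are already available in the paper: the RIP-based recovery guarantee for Basis Pursuit Denoising, which gives $\|\bm{x} - \bm{x}^*\|_{l_2} \le C_0 \delta$ whenever the noise satisfies $\|\bm{z}\|_2 \le \delta$ and $\delta_{2S} < \sqrt{2}-1$; and a high-probability tail bound on the $\ell_2$-norm of the Laplace noise vector $\bm{z}$ whose entries are i.i.d. $\mathrm{Lap}(\sqrt{m}/\epsilon)$. So the real content is: (i) argue that with high probability $\|\bm{z}\|_2 = O(\sqrt{m}\log(n)/\epsilon)$ — or, after accounting for the $\Psi$ orthonormal change of basis and the normalization of $\Phi$, that the induced error on $\bm{x}$ is $O(\log(n)/\epsilon)$ — and (ii) plug this value of $\delta$ into the BPDN error bound. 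First I would write $\tilde{\bm{y}} = \bm{y} + \bm{z}$ with $\bm{y} = A\bm{s}$, note that $\bm{x}^*$ is the BPDN solution with tolerance $\delta := \|\bm{z}\|_2$, and invoke the stated RIP guarantee to get $\|\bm{x} - \bm{x}^*\|_2 \le C_0 \|\bm{z}\|_2$.

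Next I would estimate $\|\bm{z}\|_2$. Each coordinate $\bm{z}[i] \sim \mathrm{Lap}(\sqrt{m}/\epsilon)$, so $|\bm{z}[i]|$ has mean $\sqrt{m}/\epsilon$ and exponential tails; a standard concentration argument (e.g. a union bound over the $m$ coordinates using $\Pr[|\bm{z}[i]| > t\sqrt{m}/\epsilon] = e^{-t}$, or a Bernstein-type bound on $\sum_i \bm{z}[i]^2$) gives that with probability at least $1 - 1/\mathrm{poly}(n)$ every coordinate is $O(\sqrt{m}\log(n)/\epsilon)$ in magnitude, hence $\|\bm{z}\|_\infty = O(\sqrt{m}\log(n)/\epsilon)$ and, more carefully, $\|\bm{z}\|_2 = O(\sqrt{m}\cdot\sqrt{m}\log(n)/\epsilon) = O(m\log(n)/\epsilon)$ in the crude bound, or $\|\bm{z}\|_2 = O(\sqrt{m}\log(n)/\epsilon \cdot \sqrt{m})$ — the precise dependence on $m$ is where I would have to be careful and is governed by how the measurement matrix normalization and the relation $m \ll n$ are used. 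The cited reference \cite{LZW} presumably carries out exactly this Laplace-norm concentration in the compressive-privacy setting, so I would reproduce its key inequality and let the $m$-factors be absorbed, using the fact that for the normalized Gaussian $\Phi$ the map $\bm{x} \mapsto \Phi\bm{x}$ is near-isometric so that error in $\bm{y}$-space translates to comparable error in $\bm{x}$-space up to the RIP constants.

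Then I would assemble the pieces: $\|\bm{x}^* - \bm{x}\|_2 \le C_0 \|\bm{z}\|_2 \le C_0 \cdot O(\log(n)/\epsilon) = O(\log(n)/\epsilon)$ with high probability, after the dimension factors cancel against the normalization of $\Phi$ (which scales measurements by roughly $1/\sqrt{m}$, exactly compensating the $\sqrt{m}$ scale of each Laplace coordinate), and note that $\bm{x}^* = \Psi \bm{s}^*$ together with $\Psi$ orthonormal means the error is preserved under the basis change, $\|\Psi(\bm{s}^* - \bm{s})\|_2 = \|\bm{s}^* - \bm{s}\|_2$.

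**The main obstacle** is bookkeeping the dimensional factors honestly: the Laplace scale $\sqrt{m}/\epsilon$ per coordinate naively gives $\|\bm{z}\|_2 \approx m/\epsilon$, and to land at $O(\log(n)/\epsilon)$ one must exploit that the normalized measurement matrix $\Phi$ has rows of norm $\approx 1/\sqrt{m}$ (or that the reconstruction divides through by such a factor), that the RIP constant $C_0$ is an absolute constant, and that $\log m = O(\log n)$. Making this normalization explicit — i.e. being precise about whether the sensitivity $\Delta q = \sqrt{m}$ is measured before or after normalization and how that feeds into $\delta$ — is the delicate step; the rest (RIP recovery bound, Laplace tail bound, orthonormal invariance of $\ell_2$) is routine and can be cited or stated without grinding through constants.
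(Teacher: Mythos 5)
Your skeleton (the BPDN/RIP bound $\|\bm{x}-\bm{x^*}\|_2\le C_0\delta$ plus a high-probability bound on $\|\bm{z}\|_2$) is the right one — note the paper itself gives no proof, it simply cites \cite{LZW} — but the step where the $\log(n)$ is actually produced is a genuine gap. Your own bookkeeping gives $\|\bm{z}\|_2=\Theta(m/\epsilon)$ (there are $m$ coordinates, each $\mathrm{Lap}(\sqrt{m}/\epsilon)$), and you then claim the polynomial-in-$m$ factor disappears because the $1/\sqrt{m}$ normalization of $\Phi$ ``exactly compensates'' the $\sqrt{m}$ noise scale. It cannot: rescaling $\Phi$ by any constant rescales the sensitivity $\Delta q$, hence the calibrated noise, and the measurements $\bm{y}$ by the same constant, so the reconstruction error $\|\bm{x^*}-\bm{x}\|_2$ is invariant under that rescaling — the normalization is already spent in fixing $\Delta q$ and the RIP constants, and there is no cancellation left to harvest. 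Likewise, the union bound over coordinates only adds a $\log$ factor ($\|\bm{z}\|_\infty=O(\sqrt{m}\log n/\epsilon)$); it never removes the $m$. Chained honestly, your argument terminates at $\|\bm{x^*}-\bm{x}\|_2=O(m/\epsilon)$, not $O(\log(n)/\epsilon)$.

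The missing idea is the relation between $m$ and $n$: in \cite{LZW} the number of measurements is taken to be the minimum for which a Gaussian $\Phi$ satisfies the RIP with $\delta_{2S}<\sqrt{2}-1$, namely $m=\Theta(S\log(n/S))$, and the sparsity $S$ is treated as a constant. With that choice, a standard Laplace concentration bound gives $\|\bm{z}\|_2=O(m/\epsilon)=O(\log(n)/\epsilon)$ with high probability, feasibility of $\bm{s}$ in the BPDN program with $\delta\asymp\|\bm{z}\|_2$ plus the RIP bound gives $\|\bm{s^*}-\bm{s}\|_2\le C_0\delta$, and orthonormality of $\Psi$ transfers this to $\bm{x}$, which is the lemma. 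Your proposal never invokes $m=O(\log n)$ — indeed, under the paper's own experimental setting $m=n/2$, $S=m/6$, the chain you set up would only yield $O(n/\epsilon)$. The fix is to replace the normalization-cancellation step by the explicit measurement count $m=\Theta(S\log(n/S))$ with $S=O(1)$; the rest of your outline (BPDN bound, Laplace tail bound, invariance under the orthonormal basis change) then goes through as you wrote it.
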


\subsubsection{Fully authorized users}

For a fully authorized user possessing both the CS measurement matrix $A$ and encoding matrix $B$ can fully recover the data, we follow the recovery method proposed in \cite{YSG}. First, a matrix $F$ is constructed to annihilates the matrix $B$ such that $FB=0$, of which $F\in R^{T\times m}$ and $T=m-M$. Then, apply $F$ to $y_w=A\bm{s}+B\bm{w}+\bm{z}$:

\begin{equation}
\bm{y'}=F(A\bm{s}+B\bm{w}+\bm{z})=FA\bm{s}+F\bm{z}
\end{equation}

Let $\tilde{\bm{z}}=F\bm{z}$, a $k$-sparse signal $\tilde{\bm{s}}$ can be estimated via

\begin{equation}
\begin{aligned}
&\tilde{\bm{s}}=\arg \min_{\bm{s}} \|\bm{s}\|_1 \\        
&\mathrm{s.t.} \quad \|\bm{y'}-FA\bm{s}\|_2\leq \delta
\end{aligned}
\end{equation}

Then, based on the $\tilde{\bm{s}}$ computed in last step, the estimation of $\bm{w}$ can be computed by using of least-squares method:

\begin{equation}
\begin{aligned}
\bm{w'}=(B^\mathrm{T}B)^{-1}B^\mathrm{T}(\bm{y_w}-A\tilde{\bm{s}})
\end{aligned}
\end{equation}

Since $w_i$ is either $-a$ or $a$, an improved estimate of $\bm{w}$ is $\bm{w''}_i=a*sgn(\bm{w'}_i)$.

Therefore, an improved estimate of $\bm{s}$ can be recovered as:

\begin{equation}
\begin{aligned}
&\bm{s^*}=\arg \min_{\bm{s}} \|\bm{s}\|_1 \\
&\mathrm{s.t.} \quad \|(\bm{y_w}-B\bm{w''})-A\bm{s}\|_2\leq \delta
\end{aligned}
\end{equation}

The original data can be constructed as $\bm{x^*}=\Psi\bm{s^*}$. 
The recovery algorithm for the fully authorized user is provided in Algorithm \ref{ag3}

\begin{algorithm}
\begin{small}
\caption{Reconstruction for a fully authorized user \cite{YCD}}
\setstretch{1.35}
\label{ag3}
\LinesNumbered
\KwIn{$\bm{y_w}$, $\Phi$, $\Psi$, A, B\\
  Hyper-parameter: $\delta$}
\KwOut{$\bm{x^*}$}

Apply $F$ to $\bm{y_w}$: $\bm{y'}=F(A\bm{s}+B\bm{w}+\bm{z})=FA\bm{s}+F\bm{z}$

Solve the estimation $\tilde{\bm{s}}$: $\tilde{\bm{s}}=\arg \min_{\bm{s}} \|\bm{s}\|_1 \quad\mathrm{s.t.} \quad \|\bm{y'}-FA\bm{s}\|_2\leq \delta$

Compute $\bm{w'}$: $\bm{w'}=(B^\mathrm{T}B)^{-1}B^\mathrm{T}(\bm{y_w}-A\tilde{\bm{s}})$

Compute the threshold $\bm{w''}$: $\bm{w''}_i=a*sgn(\bm{w'}_i)$

Compute an improved estimation $\bm{s^*}$: $\bm{s^*}=\arg \min_{\bm{s}} \|\bm{s}\|_1 \quad \mathrm{s.t.} \quad \|(\bm{y_w}-B\bm{w''})-A\bm{s}\|_2\leq \delta$

Compute $\bm{x^*}=\Psi\bm{s^*}$

\textbf{return} $\bm{x^*}$
\end{small}
\end{algorithm}

\section{Experimental Evaluations}\label{sec:eva}

In this section, we conduct extensive experiments to demonstrate the performance of our ML-DPCS mechanism and compare it with two benchmark approaches, PrivBayes \cite{ZCPC} and DPPro \cite{XRZ}, on four real-world datasets of NLTCS \cite{nltcs}, ACS \cite{acs}, BR2000 \cite{br2000} and Adult \cite{adult}. The data utility is evaluated by the errors of the original data and the published data with different utility levels, and the classification error rate of the SVM classification on the perturbed datasets.

\subsection{Experimental Settings}\label{sec:datasets}

\subsubsection{Datasets} We make use of four real-world datasets in our experiments: NLTCS \cite{nltcs} consists records of $21574$ individuals participated in the National Long Term Care Survey, and each record has 16 attributes; ACS \cite{acs} includes $47461$ records of personal information from the $2013$ and $2014$ ACS sample sets in IPUMS-USA, where each record has $23$ attributes; BR2000 \cite{br2000} consists of $38000$ census records with $14$ attributes collected from Brazil in the year 2000; and Adult \cite{adult} contains personal information such as gender, salary, and education level of $45222$ records extracted from the 1994 US Census, where each record has 15 attributes. The first two datasets only contain binary attribute values while the last two possess continuous as well as categorical attributes with multiple values. We summarize the statistics of these datasets in Table \ref{tab:data}. 
\begin{table}[!htb]
\renewcommand{\arraystretch}{1.4}
\caption{Data Characteristics} 
\centering
\begin{tabular}{|c|c|c|c|}
\hline
\textbf{Dataset}& \textbf{Cardinality}   &    \textbf{Dimensionality}         &   \textbf{Domain size} \\    
\hline 
\hline                    
NLTCS &  $21574$ &  $16$ & $\approx 2^{16}$   \\
\hline
ACS &  $ 47461$ &  $23$ & $\approx 2^{23}$ \\
\hline
BR2000 &  $ 38000$ &  $14$  & $\approx 2^{32} $ \\
\hline
Adult &  $ 45222$ &  $15$ & $\approx 2^{52}$ \\                      
\hline
\end{tabular}
\label{tab:data}
\end{table}

\subsubsection{Evaluation Metrics} We consider two tasks to evaluate the performance of ML-DPCS. The first task is to study the errors of the output data and the original data, namely $\|\bm{x}-\bm{x^*}\|_2$. Here, we only compute the $L_2$ error between original data and the partially recovered data of semi-honest users and the fully recovered data of fully authorized users, and only compare this with the result of PrivBayes since the published perturbed data of the ML-DPCS and the output of DPPro are in  lower dimensional space. A lower $L_2$ error represents high data utility.

The second task is to evaluate the classification results of SVM classifiers. The purpose of data publication is to conduct data analysis and data mining. We adopt SVM to evaluate the data utility from the perspective of data applications, as SVM is the most popular classification approach among various data mining techniques with powerful discriminative features both in linear and non-linear classifications \cite{SNSR}.  Specifically, we train two classifiers on ACS to predict whether an individual: (1) goes to school, (2) lives in a multi-generation family; four classifiers are constructed on NLTCS to predict whether an individual: (1) is unable to manage money, (2) is unable to bathe; two classifiers are trained on BR2000 to predict whether an individual (1) owns a private dwelling, (2) is a Catholic; and two classifiers are trained on Adult to predict whether an individual (1) is a female, (2) makes over $50$K a year. For each classifier, we use $80\%$ of the tuples of the dataset for training and the other $20\%$ as the testing set. The prediction accuracy of each SVM classifier is measured by the \emph{misclassification rate} on the testing set.

\subsubsection{Parameter Values}

Recall that $S$ is the data sparsity, $n$ is the dimensionality of the original data, $m$ is the compressed dimensionality and $M$ is the length of the embedded data.
Based on the compressive sensing theory and determined after many trials, we set the data sparsity $S=m/6$, $m/n=0.5$ and $M/m=0.2$, of which $m/n$ is the measurement rate and $M/m$ represents the embedding rate. A larger measurement rate (smaller compression ratios) and smaller embedding rate can obtain better performance.

\subsection{Experimental Results}\label{sec:results} 

We carry out $50$ independent runs for each of the experiments mentioned above. In this subsection, we report the averaged results of these experiments for statistical confidence.

\subsubsection{Results on $L_2$ Error}

For the task of examining the accuracy of $L_2$ Error, we report the $E\{\|\bm{x}-\bm{x^*}\|_2\}$ values on the partially recovered data (P-Our) and fully of recovered data (F-Our) of our ML-DPCS, and compare with that of PrivBayes under a varying privacy budget $\epsilon$ from $0.01$ to $1.6$.

\begin{figure}[!htb]
\begin{minipage}[t]{0.23\textwidth}
\centering
\subfigure[ACS, $L_2$ Error]{
\includegraphics[width=\textwidth]{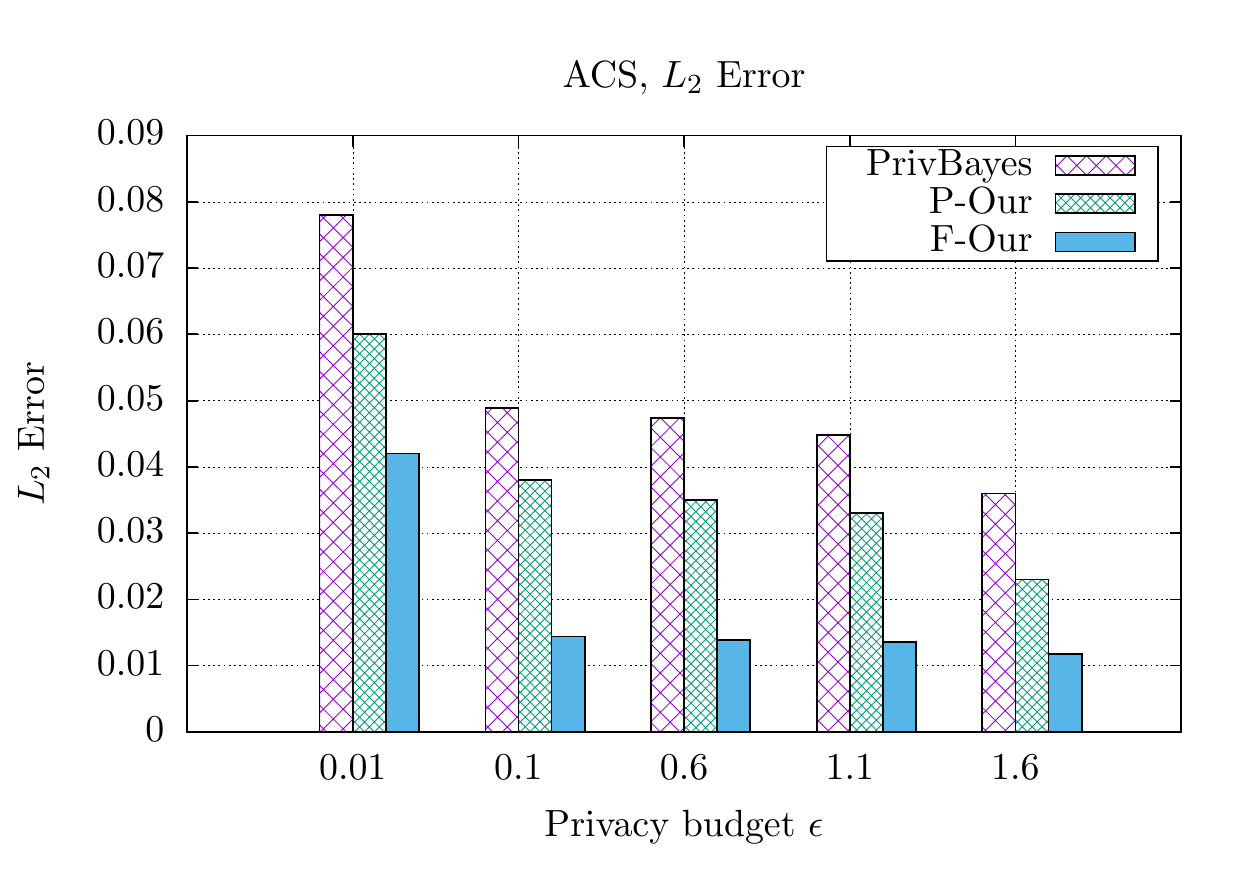}}
\end{minipage}
\begin{minipage}[t]{0.24\textwidth}
\centering
 \subfigure[NLTCS, $L_2$ Error]{
\includegraphics[width=\textwidth]{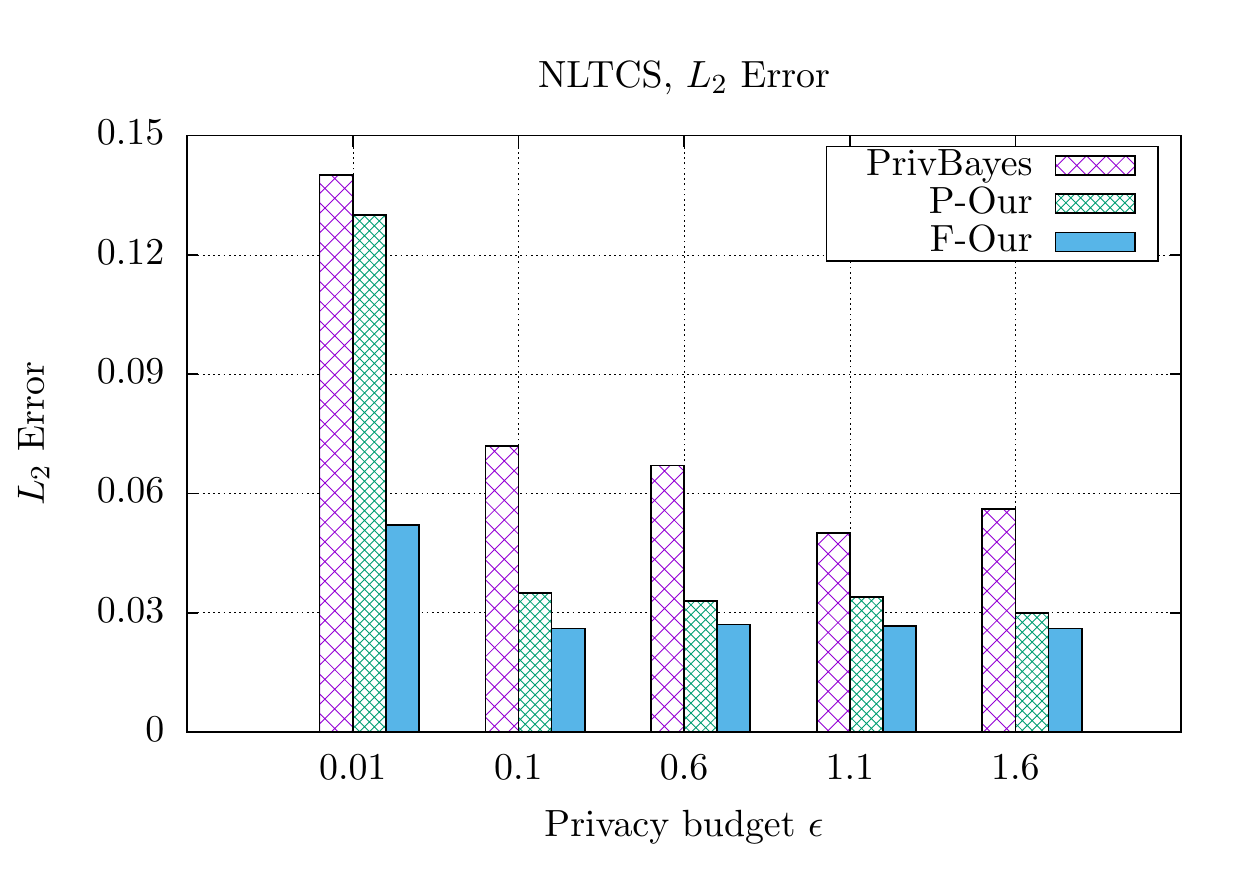}}
\end{minipage}

\begin{minipage}[t]{0.24\textwidth}
\centering
 \subfigure[BR2000, $L_2$ Error]{
\includegraphics[width=\textwidth]{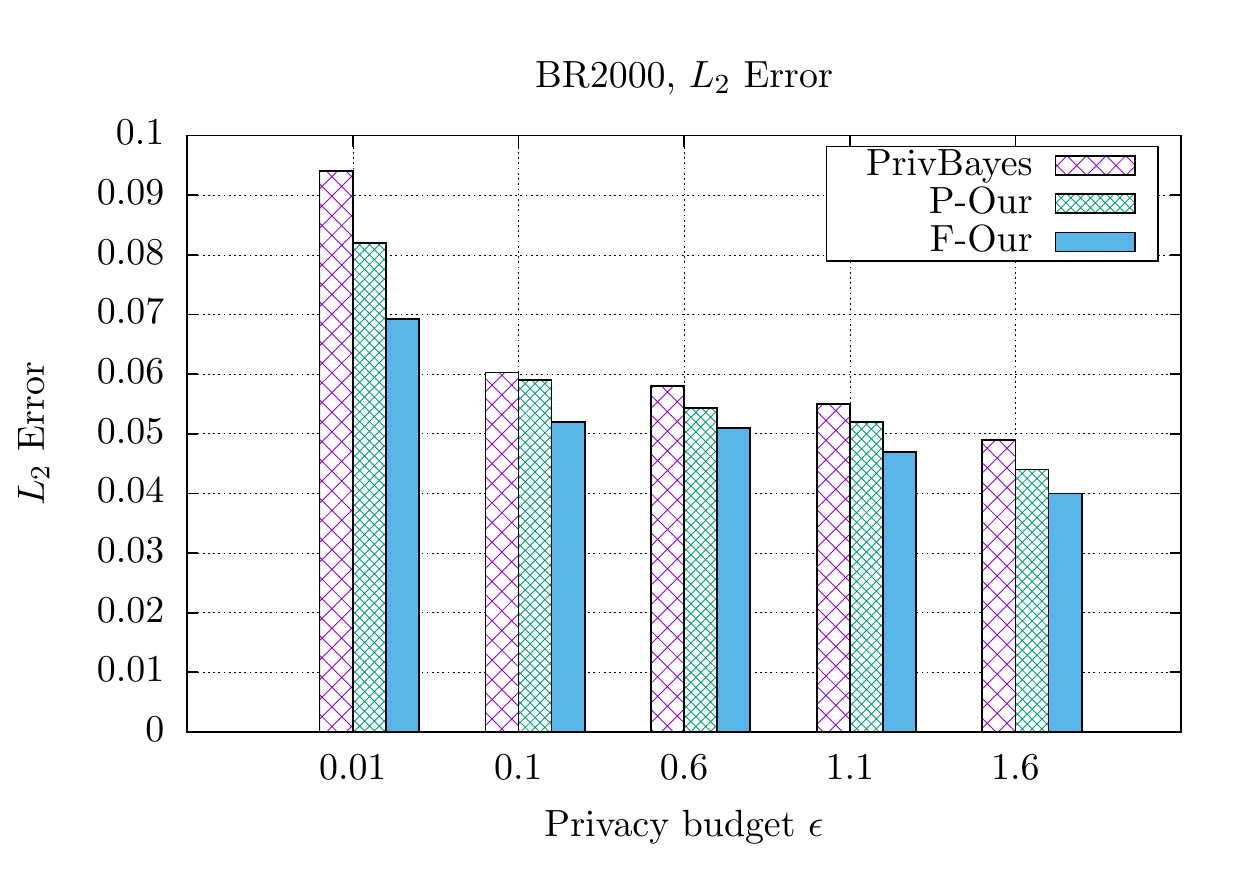}}
\end{minipage}
\begin{minipage}[t]{0.24\textwidth}
\centering
\subfigure[Adult, $L_2$ Error]{
\includegraphics[width=\textwidth]{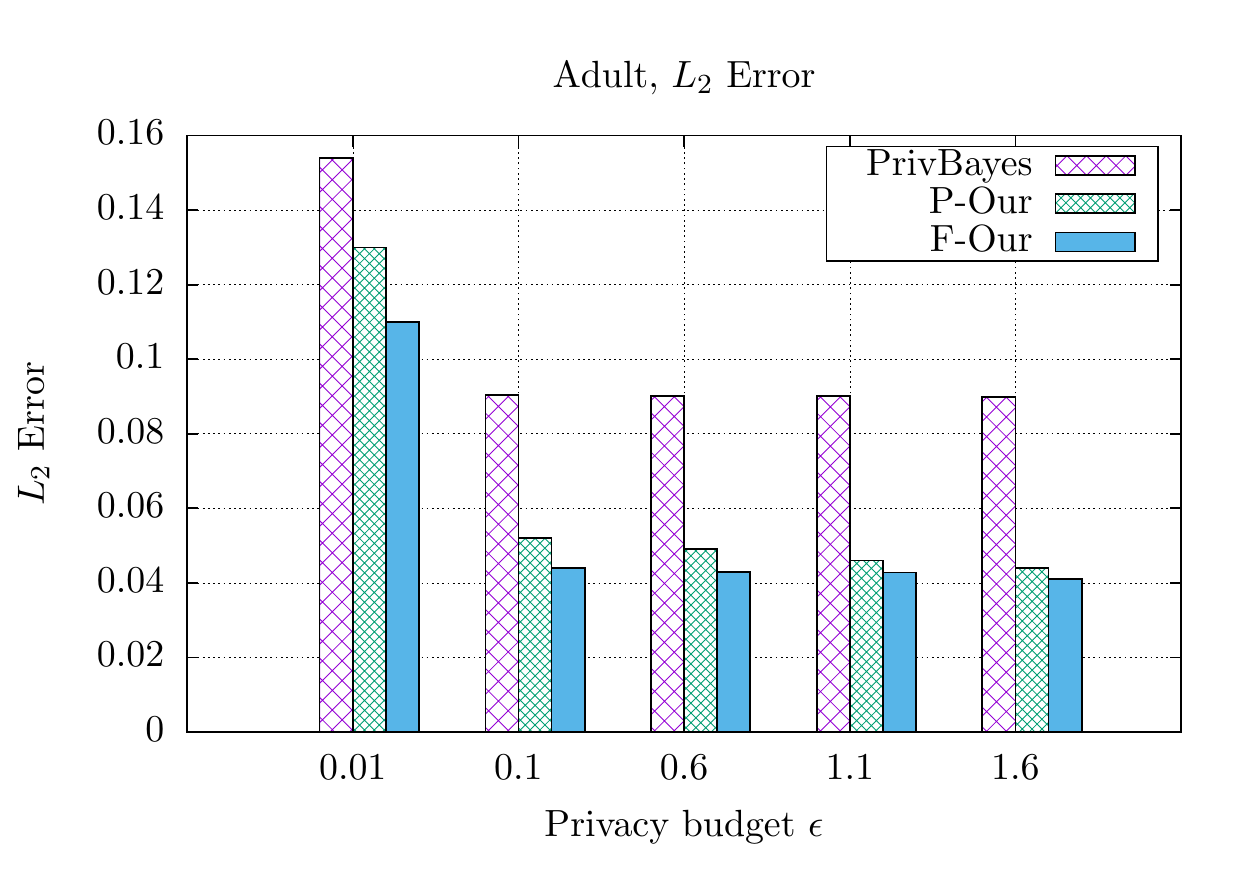}}
\end{minipage}

\caption{Results of $L_2$ error with different $\epsilon$.}
\label{fig4}
\end{figure}

Figure \ref{fig4} shows the average $L_2$ errors of different data on the four datasets, respectively. From Figure \ref{fig4}, one can see that the average $L_2$ error of these approaches decrease when $\epsilon$ increases over the four datasets. It is obvious that when $\epsilon$ is larger, smaller noise is required, and the data utility is higher. We can observe that not only the partially recovered data but also the fully recovered data outperform PrivBayes in all cases, while the relative superiority of the fully recovered data is more pronounced when $\epsilon$ is small. 

\subsubsection{Results on SVM classification}

\begin{figure}[!htp]
\begin{minipage}[t]{0.24\textwidth}
\centering
\subfigure[ACS, Y=School]{
\includegraphics[width=\textwidth]{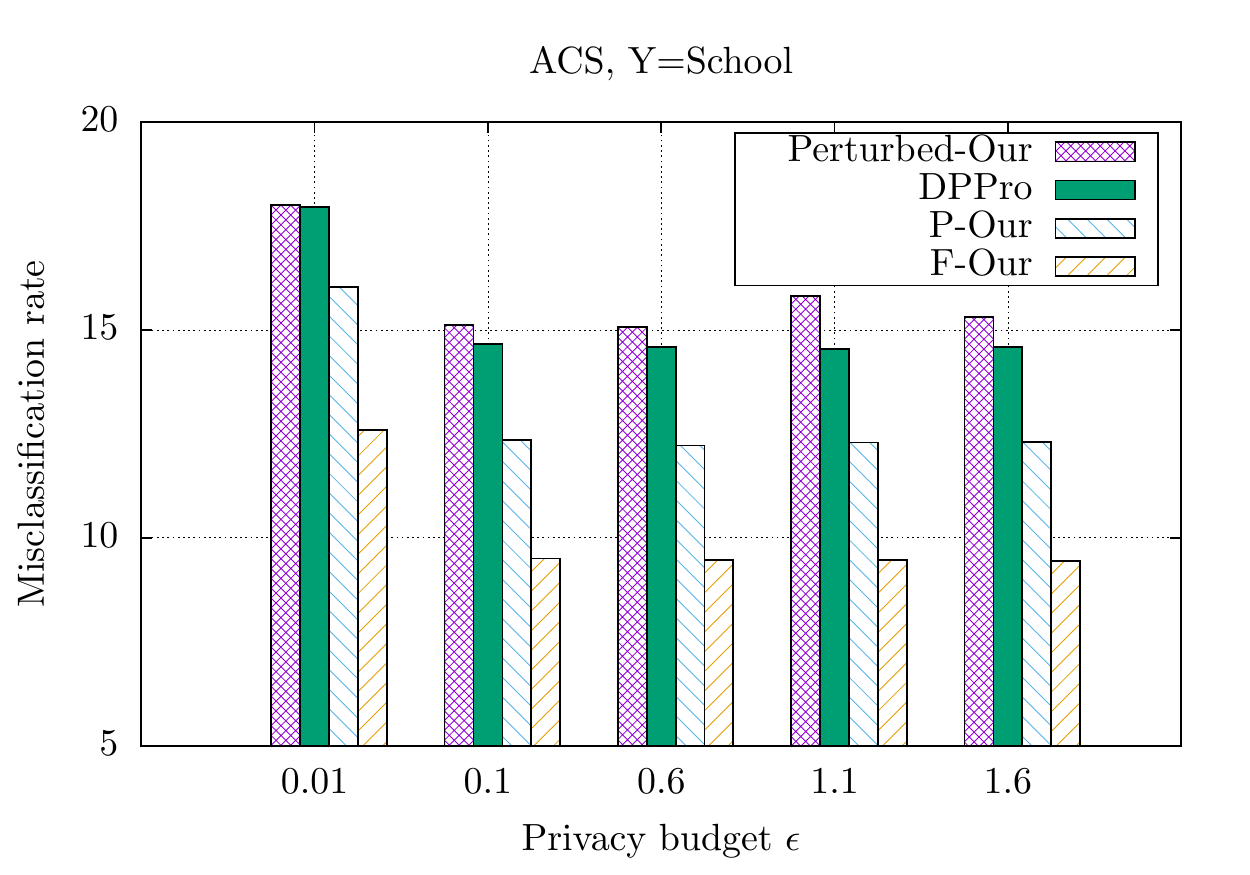}}
\end{minipage}
\begin{minipage}[t]{0.24\textwidth}
\centering
\subfigure[ACS, Y=Multi-gen]{
\includegraphics[width=\textwidth]{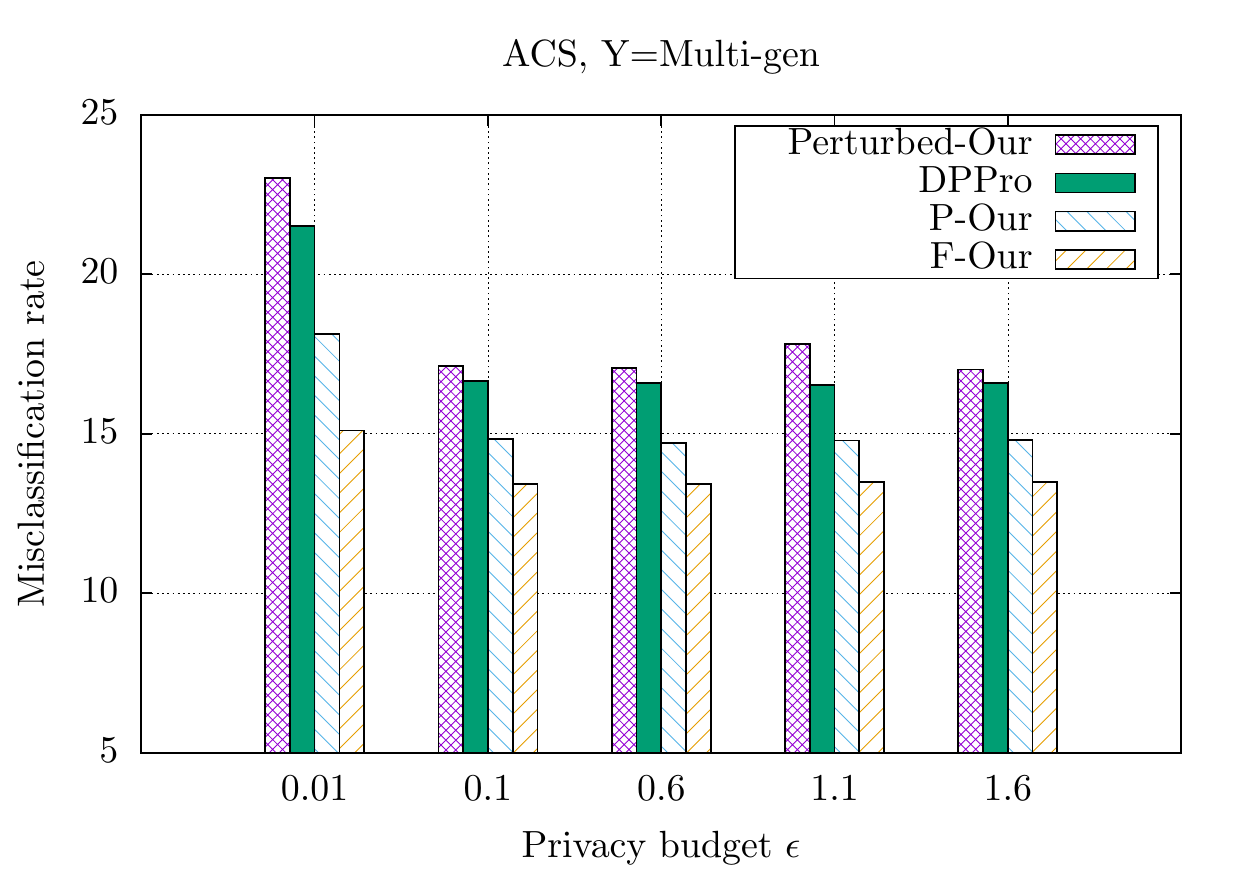}}
\end{minipage}

\begin{minipage}[t]{0.24\textwidth}
\centering
\subfigure[NLTCS, Y=Money]{
\includegraphics[width=\textwidth]{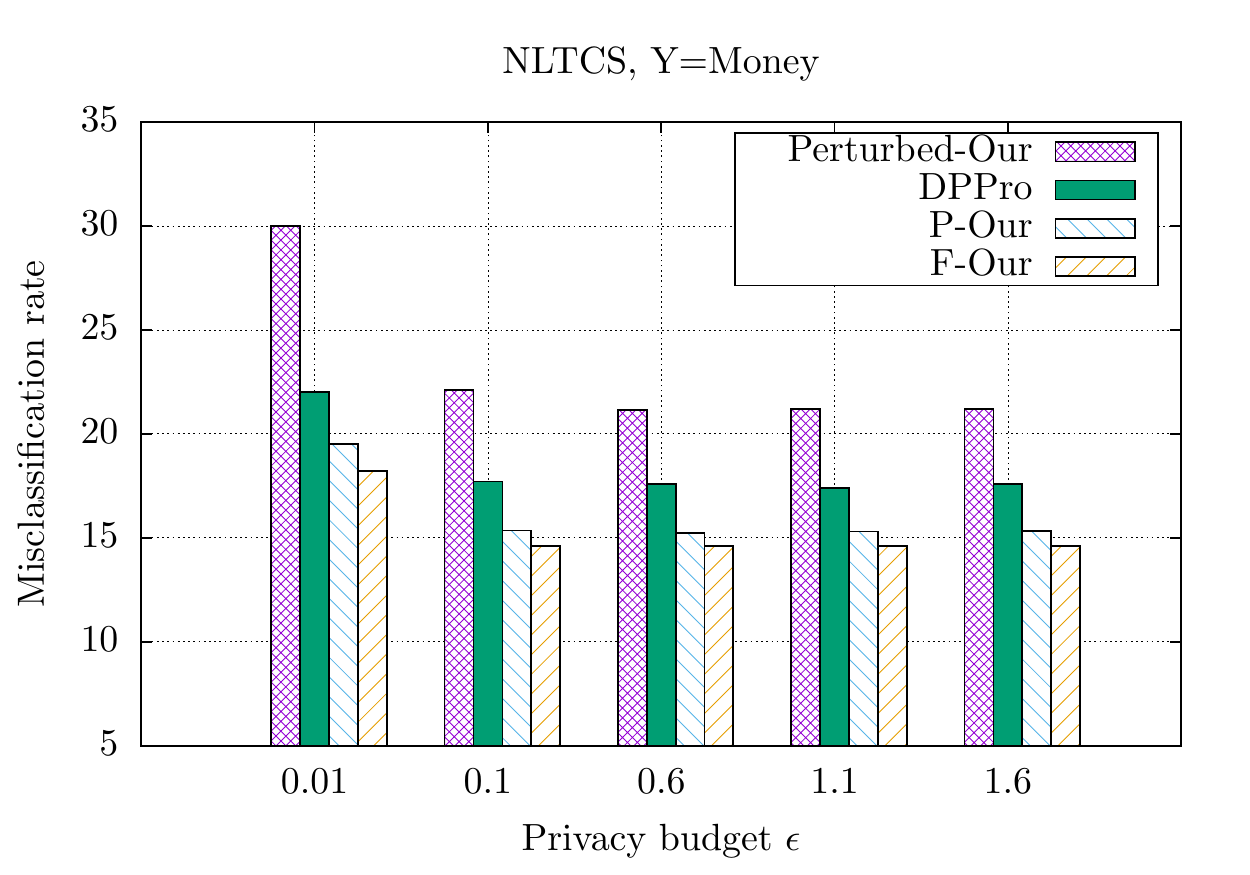}}
\end{minipage}
\begin{minipage}[t]{0.24\textwidth}
\centering
\subfigure[NLTCS, Y=Bathing]{
\includegraphics[width=\textwidth]{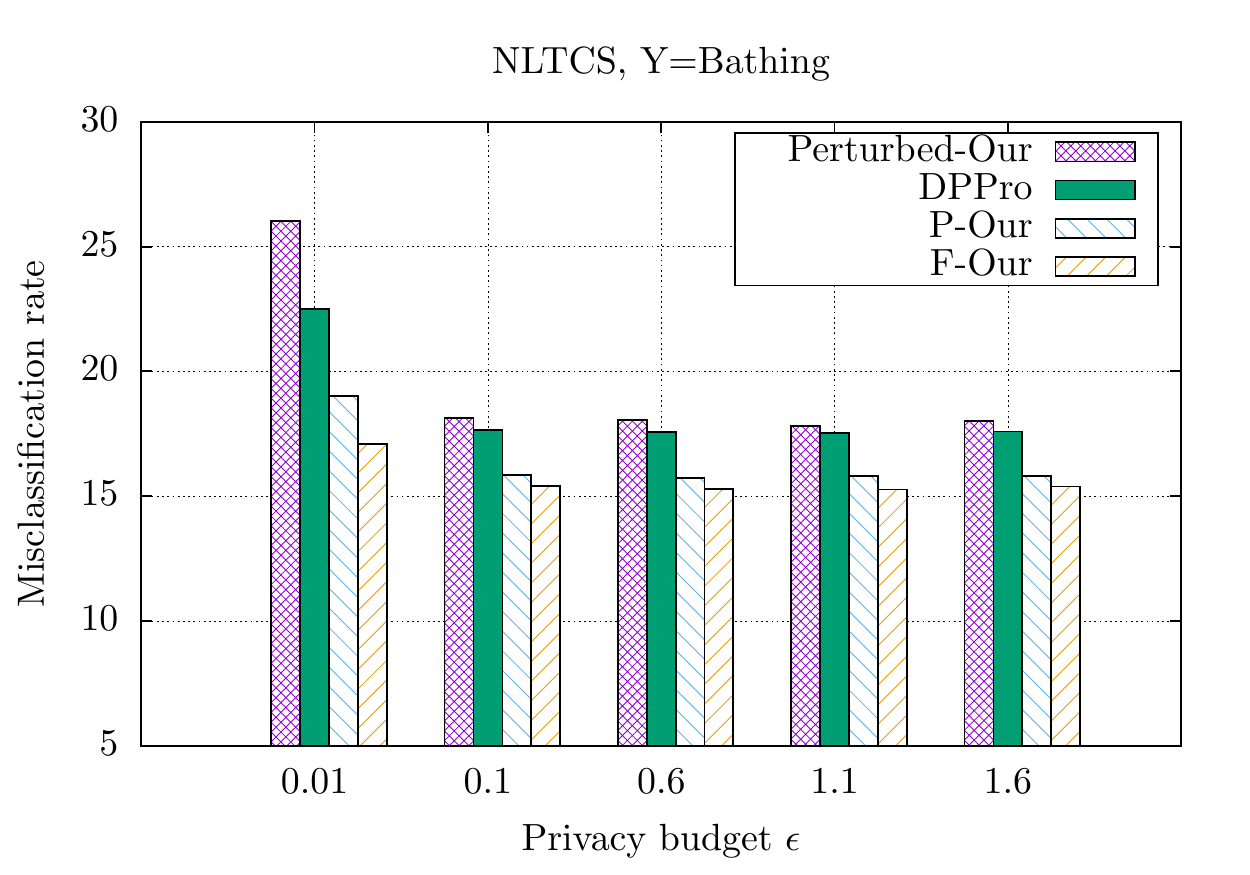}}
\end{minipage}

\begin{minipage}[t]{0.24\textwidth}
\centering
\subfigure[BR2000, Y=Dwelling]{
\includegraphics[width=\textwidth]{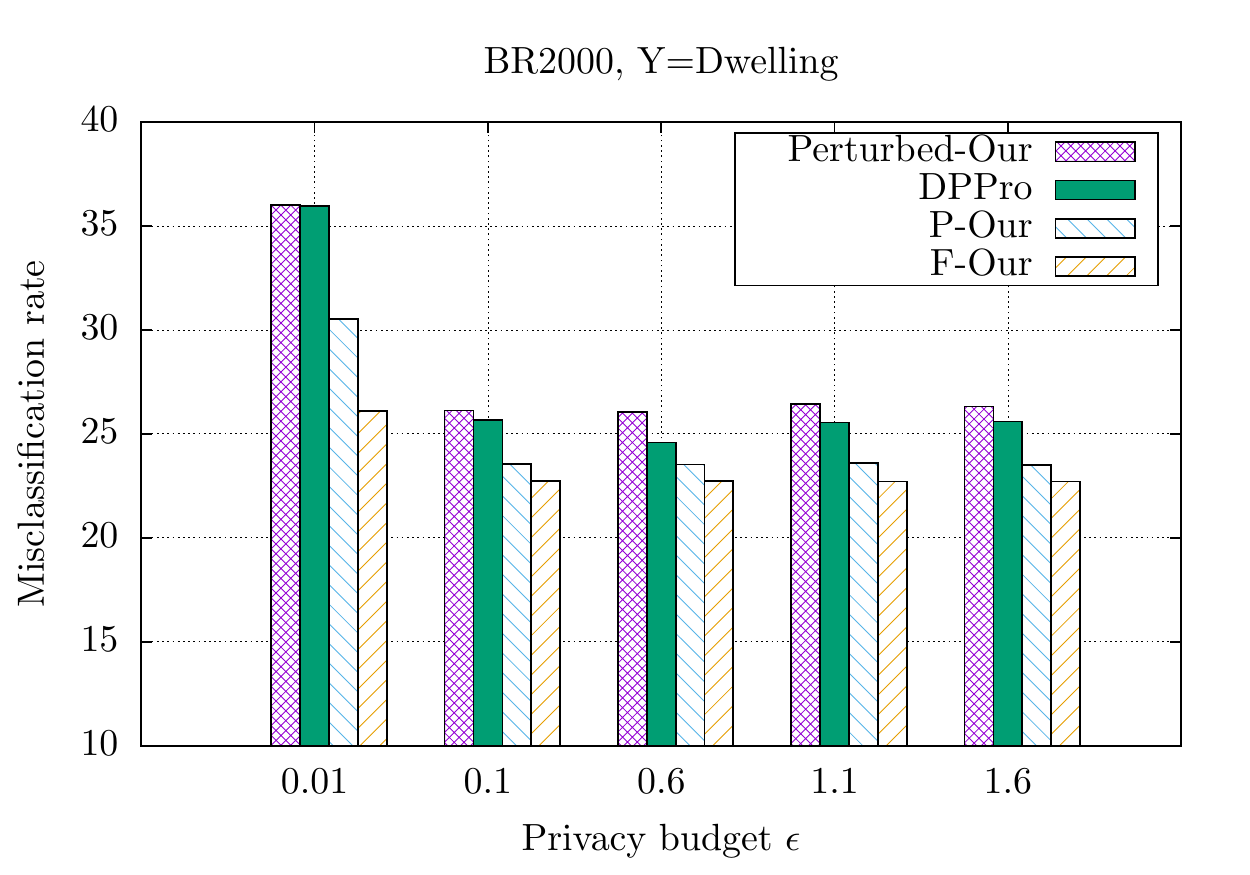}}
\end{minipage}
\begin{minipage}[t]{0.24\textwidth}
\centering
\subfigure[BR2000, Y=Religion]{
\includegraphics[width=\textwidth]{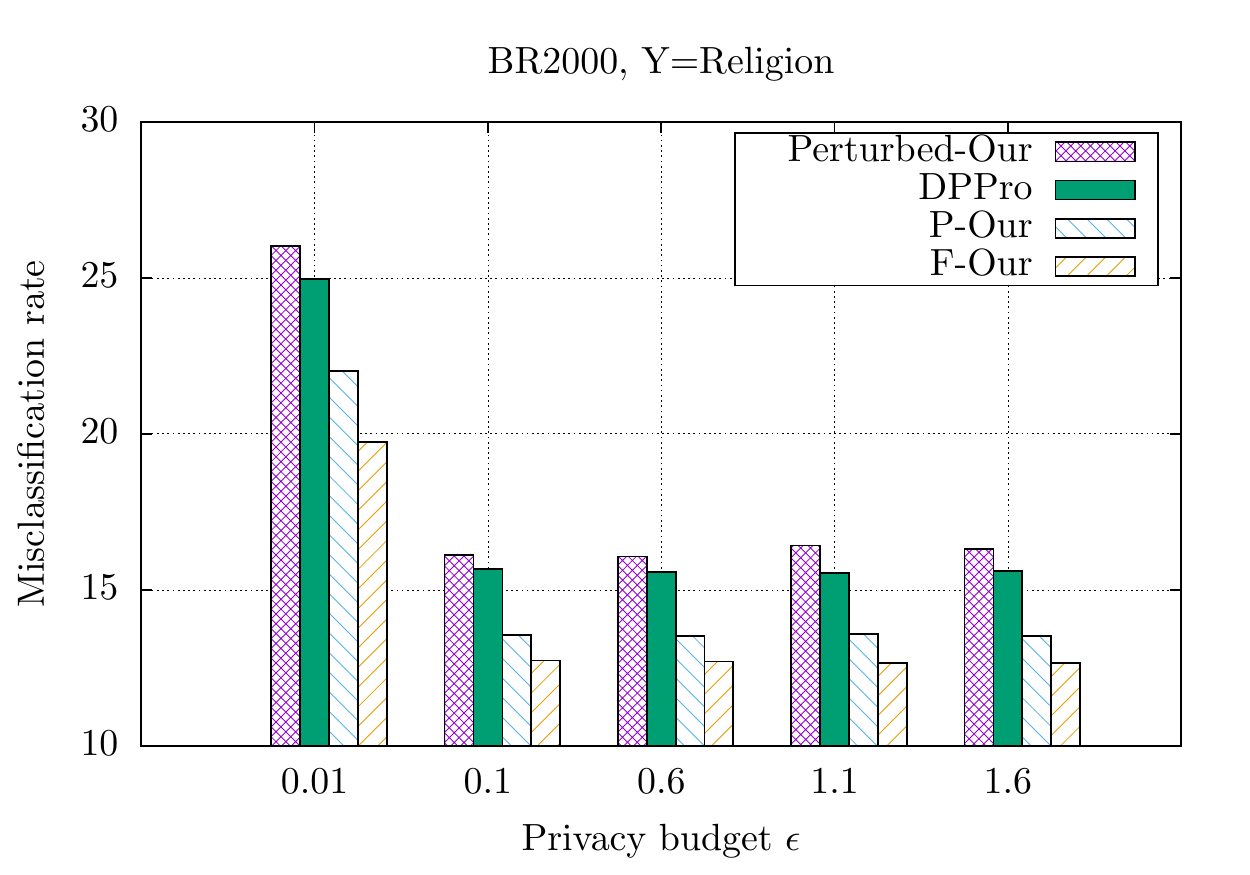}}
\end{minipage}

\begin{minipage}[t]{0.24\textwidth}
\centering
\subfigure[Adult, Y=Gender]{
\includegraphics[width=\textwidth]{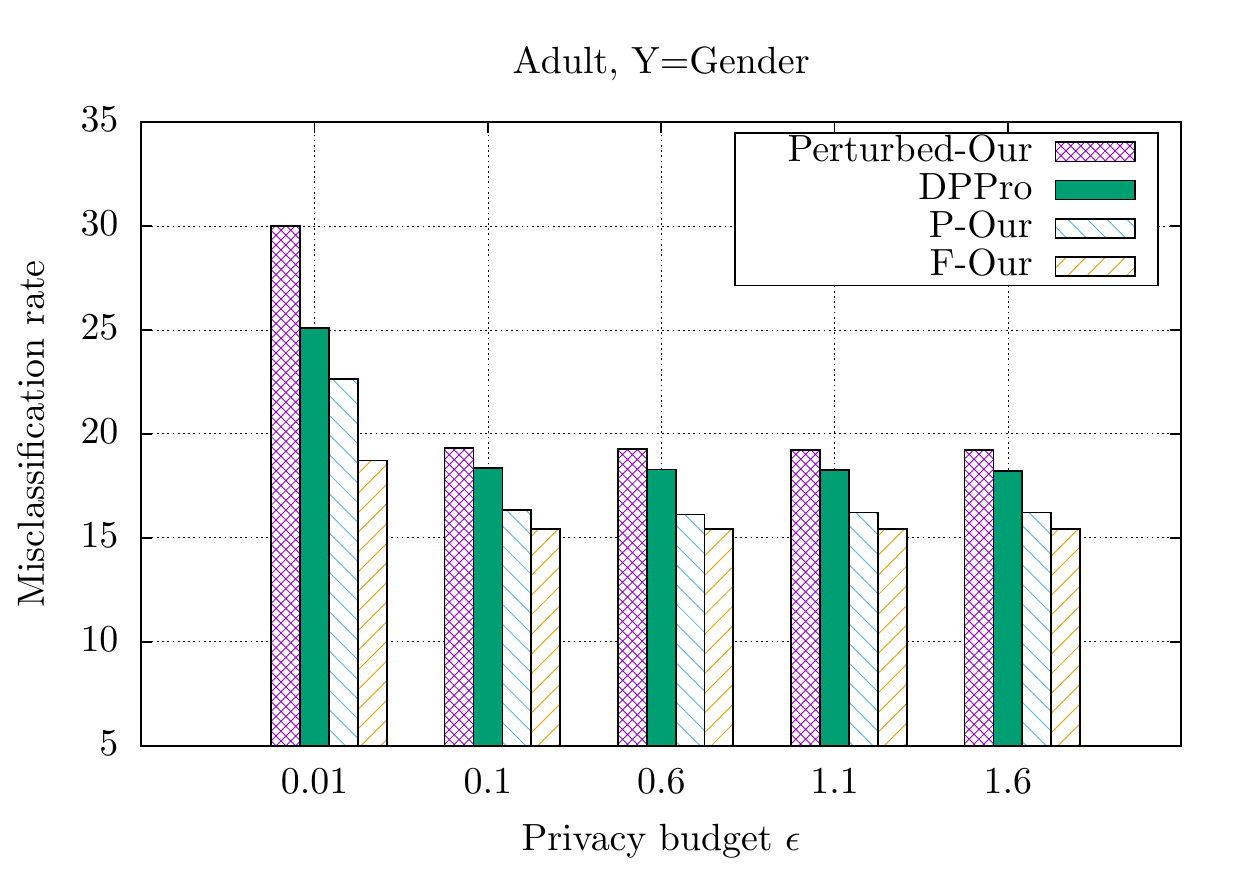}}
\end{minipage}
\begin{minipage}[t]{0.24\textwidth}
\centering
\subfigure[Adult, Y=Salary]{
\includegraphics[width=\textwidth]{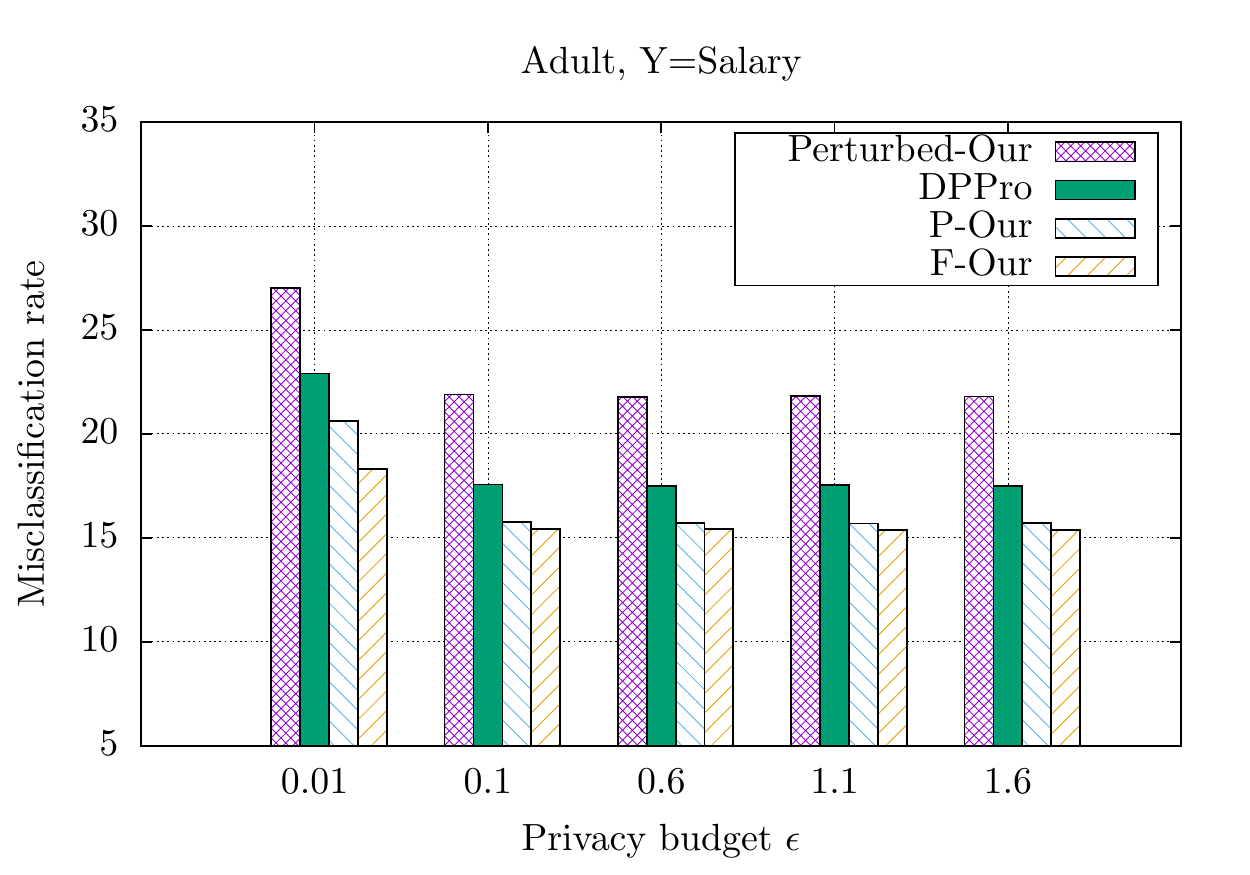}}
\end{minipage}
\caption{Results of SVM with different $\epsilon$}
\label{fig5}
\end{figure}

For the second task, we evaluate the performance of DPPro, and Perturbed data (Perturbed-Our), partially recovered data (P-Our) and fully recovered data (F-Our) of our ML-DPCS, for SVM classification. Figure \ref{fig5} shows the misclassification rate of each data copy under different privacy budgets. 
As shown in Figure \ref{fig5}, we can obtain the same conclusion with the results of $L_2$ error. Noting that the data utility of the perturbed data of ML-DPCS is lower than DPPro, while both P-Our and F-Our achieve better performance than DPPro on all datastets. Moreover, one can see that the misclassification rate decreases faster when $\epsilon$ increases from $0.01$ to $0.1$, and the decrease of the misclassification rate is not obvious when $\epsilon$ is larger than $0.6$. This indicates that a higher privacy level with a small $\epsilon$ leads to lower data utility.

\section{Conclusions and Future Research}\label{sec:con}

In this paper, we propose a differentially private data publication mechanism ML-DPCS achieving multi-level data utility with the help of compressive sampling theory. Specifically, the $n$-dimensional data can be projected into a lower $m$-dimensional space, then deliberately designed noise is added to provide differential privacy guarantee. Secondly, we selectively obfuscate the measurement vector by adding linearly encoded noise. Then we provide different data reconstruction algorithms for users with different authorization levels. Comprehensive experiments on four real-world datasets demonstrate that ML-DPCS achieves multi-level data utility, of which two levels of data utility outperform existing methods while providing differential privacy. In our future research, we intend to carefully design the adding noise to improve the data utility of perturbed data, and apply this framework on other privacy-preserving applications such as network/graph data.

\section*{Acknowledgment}

This work was partially supported by the US National Science Foundation under grant CNS-1704397. 

\bibliographystyle{IEEEtran}      
\bibliography{ref}

\end{document}